\tikzstyle{vtx}=[circle, inner sep= 0pt, minimum size= 1.2mm, fill]   
\newtheorem{te}{Theorem}[section]
\newcommand{\beq}{\begin{eqnarray}}
\newcommand{\eeq}{\end{eqnarray}}
\newcommand{\beqs}{\begin{eqnarray*}}
\newcommand{\eeqs}{\end{eqnarray*}}
\newcommand{\irr}{{\rm irr}}
\newcommand{\imb}{{\rm imb}}
\begin{document}
\title{The Total Irregularity of Graphs under Graph Operations} 
\maketitle
\begin{center}
{\large \bf Hosam Abdo, Darko Dimitrov}
\end{center}
\baselineskip=0.20in
\begin{center}
{\it Institut f\"ur Informatik, Freie Universit\"{a}t Berlin, \\ 
Takustra{\ss}e 9, D--14195 Berlin, Germany} \\ 
E-mail: {\tt [abdo,darko]@mi.fu-berlin.de} \\
\end{center}
\vspace{5mm}
\begin{abstract}
The {\em total irregularity} of a graph $G$ is defined as $\irr_t(G)=\frac{1}{2} \sum_{u,v \in V(G)}$ $|d_G(u)-d_G(v)|$, 
where $d_G(u)$ denotes the degree of a vertex $u \in V(G)$. In this paper we give (sharp) upper bounds on the total
irregularity of graphs under several graph operations including join, lexicographic product, Cartesian product, 
strong product, direct product, corona product, disjunction and symmetric difference.
\end{abstract}
{\small \hspace{0.25cm}\textbf{Keywords:} 
Irregularity and total irregularity of graphs, graph operations}
%
%
\section[Introduction]{Introduction}

Let $G$ be a simple undirected graph with $|V(G)|=n$ vertices and $|E(G)|=m$ edges. The \emph{degree} of a vertex $v$ in $G$ 
is the number of edges incident with $v$ and it is denoted by $d_G(v)$. A graph $G$ is \emph{regular} if all its vertices have the same degree, 
otherwise it is \emph{irregular}. However, in many applications and problems it is of big importance to know how irregular a given graph is.
Several graph topological indices have been proposed for that purpose. Among the most investigated ones are: the irregularity of a graph 
introduced by Albertson \cite{Albertson_Irr}, the variance of vertex degrees \cite{Bell_1}, and Collatz-Sinogowitz index \cite{CollSin-57}.

The \emph{imbalance} of an edge $e=uv \in E$, defined as $\imb(e)=\left|d_G(u)-d_G(v)\right|$, appeares implicitly in the context 
of Ramsey problems with repeat degrees \cite{AlbertsonRamsey}, and later in the work of Chen, Erd\H{o}s, Rousseau, and Schlep \cite{CERS93}, 
where $2$-colorings of edges of a complete graph were considered. In \cite{Albertson_Irr}, Albertson defined the \emph{irregularity} of $G$ as
\beq \label{eqn:003}
\irr(G) = \sum_{e \in E(G)}\imb(e).
\eeq
It is shown in \cite{Albertson_Irr} that for a graph $G$, $\irr(G) < 4 n^3/27$ and that this bound can be approached arbitrary closely.
Albertson also presented upper bounds on irregularity for bipartite graphs, triangle-free graphs and a sharp upper bound for trees.
Some claims about bipartite graphs given in \cite{Albertson_Irr} have been formally proved in \cite{Henning-Rautenbach}.
Related to Albertson is the work of Hansen and M{\'e}lot \cite{Hansen_Melot}, who characterized the graphs with $n$
vertices and $m$ edges with maximal irregularity.
The irregularity measure $\irr$ also is related to the \emph{first Zagreb index} $M_1(G)$ and the \emph{second Zagreb index} $M_2(G)$,
one of the oldest and most investigated topological graph indices, defined as follows:
\beq
M_1(G) = \sum_{v\in V(G)}d^2_G(v)  \quad \mbox{and} \quad
M_2(G) = \sum_{uv\in E(G)}d_G(u)d_G(v). \nonumber
\eeq
Alternatively the first Zagreb index can be expressed as
\beq \label{eqn:001_2}
M_1(G) =\sum_{uv\in E(G)}\left[d_G(u)+d_G(v)\right]. 
\eeq
Fath-Tabar \cite{G2} established new bounds on the first and the second Zagreb indices that depend on the
irregularity of graphs as defined in (\ref{eqn:003}). In line with the standard terminology of chemical graph theory,
and the obvious connection with the first and the second Zagreb indices, Fath-Tabar named the sum in (\ref{eqn:003}) 
the \emph{third Zagreb index} and denoted it by $M_3(G)$. 
The graphs with maximal irregularity with $6$, $7$ and $8$ vertices are depicted in 
Figure \ref{KS-max-irr}.
\begin{figure}[htbp]
\centering
\vspace{-0.3cm}
\includegraphics[scale=1.0]{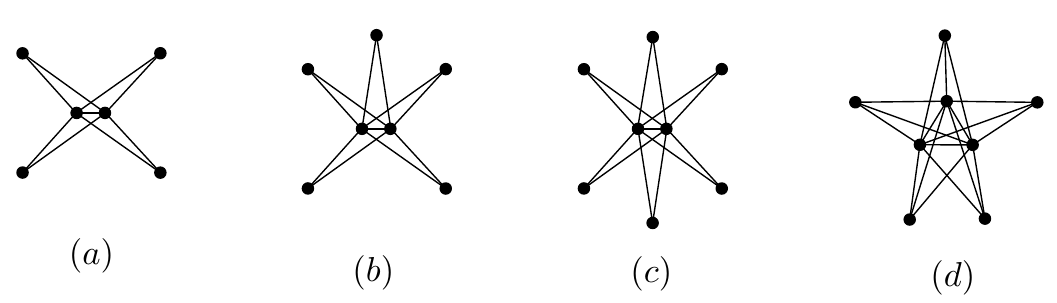} 
\caption{(a) The graph with $6$ vertices with  maximal $\irr$.
              (b)The graph with $7$ vertices with  maximal $\irr$. 
              (c) and (d) Graphs with $8$ vertices with  maximal $\irr$.}
\label{KS-max-irr}   
\vspace{-0.1cm}
\end{figure}

Two other most frequently used graph topological indices, that measure how irregular a graph is, are
the \emph{ variance of degrees} and the \emph{ Collatz-Sinogowitz index} \cite{CollSin-57}.  Let $G$ be a graph
with $n$ vertices and $m$ edges, and $\lambda_1$ be the index or largest eigenvalue of the adjacency matrix
$A = (a_{ij})$ (with $a_{ij} = 1$ if vertices $i$ and $j$ are joined by an edge and $0$ otherwise). Let $n_i$ denotes
the number of vertices of degree $i$ for $i =1, 2,\dots, n - 1$.
Then, the variance of degrees and the Collatz-Sinogowitz index are respectively defined as
\beq \label{eqn:001_3}
\mbox{Var(G)} =\frac{1}{n} \sum_{i=1}^{n-1} n_i \left( i - \frac{2 m}{n}\right)^2
\qquad \mbox{and} \qquad
\mbox{CS(G)} = \lambda_1 - \frac{2 m}{n}.
\eeq
Results of comparing $\irr$, $\mbox{CS}$ and $\mbox{Var}$  are presented in \cite{Bell_1, CvetRow88, GHM-05}.

There have been other attempts to determine how irregular graph is \cite{Alavi-88, Alavi-87, Alavi-Liu, Bell_2, Char-88, Char-87, JacEnt-86}, 
but heretofore this has not been captured by a single parameter as it was done by the irregularity measure by Albertson. 

The graph operation, especially graph products, plays significant role not only in pure and applied mathematics, but also in 
computer science. For example, the Cartesian product provide an important model for linking computers.
In order to synchronize the work of the whole system it is necessary to search for Hamiltonian paths
and cycles in the network. Thus, some results on Hamiltonian paths and cycles in Cartesian product of
graphs can be applied in computer network design \cite{YtpSg09}. Many of the problems can be easily handled if
the related graphs are regular or close to regular.

Recently in~\cite{Dimit-Abdo} a new measure of irregularity of a graph, so-called the {\it total irregularity},
that depends also on one single parameter (the pairwise difference of vertex degrees) was introduced. 
It was defined as
\beq \label{eqn:002}
\irr_t(G) = \frac{1}{2}\sum_{u, v\in V(G)} \left| d_G(u)-d_G(v) \right|.
\eeq
In the next theorem the upper bounds on the total irregularity of a graph are presented. Graphs with maximal 
total irregularity are  depicted in Figure \ref{max-total-irr}.
\begin{te} [\cite{Dimit-Abdo}] \label{thm-graphs-max-irr_t}
\label{irr2_bound}
For a simple undirected graph $G$ with $n$ vertices, it holds that 
\beq 
\irr_t(G) \leq 
    \begin{cases}  
			  \frac{1}{12}(2n^3 - 3n^2 - 2n)    &\mbox{n even,}\\
				\\
				\frac{1}{12}(2n^3 - 3n^2 - 2n +3)  &\mbox{n odd.} \nonumber
			\end{cases}
\eeq
Moreover, the bounds are sharp.
\end{te}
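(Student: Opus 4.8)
The plan is to reduce the problem to an optimization over graphical degree sequences and then to identify the extremal graph explicitly. First I would record that $\irr_t(G)$ depends only on the degree sequence of $G$: writing the degrees in nonincreasing order $d_1 \ge d_2 \ge \cdots \ge d_n$, every unordered pair $\{u,v\}$ contributes $d_i - d_j$ with $i<j$, so that
\[
\irr_t(G) = \sum_{1 \le i < j \le n}(d_i - d_j) = \sum_{i=1}^{n}(n + 1 - 2i)\, d_i .
\]
(Equivalently, grouping by degree level, $\irr_t(G) = \sum_{k=1}^{n-1} a_k(n - a_k)$ with $a_k = |\{v : d_G(v) \ge k\}|$; this form makes the obstruction transparent, but I will work with the first expression.) The weights $c_i = n + 1 - 2i$ are strictly decreasing, so by an Abel-summation (Schur-type) argument $\irr_t$ is monotone nondecreasing for the majorization order on degree sequences of a fixed sum: if $d$ dominates $d'$ then $\sum_i c_i d_i \ge \sum_i c_i d_i'$.

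Next I would use this monotonicity to restrict the class of candidate maximizers. For a fixed number of edges, the dominance-maximal element among all graphical sequences is the degree sequence of a threshold graph, so a maximizer of $\irr_t$ may always be taken to be a threshold graph. Threshold graphs are encoded by a binary creation sequence, and I would run an exchange argument on this sequence: whenever two symbols are ``out of alternation'', swapping them can only increase $\sum_i c_i d_i$. Pushing this to completion forces the alternating creation sequence, i.e.\ the connected antiregular graph $A_n$, whose degree sequence consists of $1, 2, \ldots, n-1$ with the middle value $\lfloor n/2 \rfloor$ repeated once (so that there are exactly $n$ vertices). This isolates $A_n$ as the unique candidate for the maximum and simultaneously exhibits a graph attaining the bound.

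Finally I would compute $\irr_t(A_n)$ in closed form by evaluating $\sum_{i}(n+1-2i)\, d_i$ on the antiregular degree sequence. The only delicate point is the position of the repeated degree, which depends on the parity of $n$; this is exactly what produces the two cases $\tfrac{1}{12}(2n^3 - 3n^2 - 2n)$ for $n$ even and $\tfrac{1}{12}(2n^3 - 3n^2 - 2n + 3)$ for $n$ odd. Since $A_n$ is a genuine graph realizing these values, the same computation establishes sharpness.

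The main obstacle is the reduction step, namely proving rigorously that $A_n$ is the global maximizer over all $n$-vertex graphs. Monotonicity under majorization compares only sequences of equal degree sum, so one must additionally compare the threshold maximizers across all admissible edge counts; arranging the exchange argument on creation sequences so that a single graph dominates simultaneously for every edge count is the technical heart of the proof. By comparison, the final closed-form evaluation and the parity bookkeeping are routine.
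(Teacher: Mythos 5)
This paper never proves Theorem~\ref{irr2_bound}: the statement is imported from \cite{Dimit-Abdo}, so there is no internal proof to measure your argument against, and it must be judged on its own terms. The scaffolding you set up is correct: with degrees sorted nonincreasingly, $\irr_t(G)=\sum_{i=1}^{n}(n+1-2i)\,d_i$ holds; Abel summation does show this functional is weakly increasing with respect to majorization among sequences of equal sum; every graphical sequence with a given sum is majorized by a threshold sequence with that sum (though not by a single ``dominance-maximal element'' --- maximal elements are far from unique); and evaluating on the degree sequence $1,2,\dots,n-1$ with $\lfloor n/2\rfloor$ doubled really does give $\frac{1}{12}(2n^3-3n^2-2n)$ for even $n$ and $\frac{1}{12}(2n^3-3n^2-2n+3)$ for odd $n$ (this is precisely the degree sequence of the graph $H_n$ of Figure~\ref{max-total-irr}). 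So the sharpness half of the theorem is in hand.

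The upper bound --- the actual content of the theorem --- is not. Everything rests on your exchange lemma, ``whenever two symbols are out of alternation, swapping them can only increase $\sum_i c_i d_i$'', which you do not prove, and which in the form stated is false. Concretely, take $n=8$ and the threshold graph whose creation word (after the initial vertex) is $w=1001101$; its degree sequence is $(7,6,6,4,4,3,3,1)$ and $\irr_t=66$. The word has two adjacent-equal breaks, $(w_2,w_3)=(0,0)$ and $(w_4,w_5)=(1,1)$. Swapping $w_3\leftrightarrow w_4$ repairs both, yields the alternating word $1010101$ (the antiregular graph), and raises $\irr_t$ to $68$; but swapping $w_5\leftrightarrow w_6$, which repairs the $(w_4,w_5)$ break just as legitimately, yields $1001011$ with degrees $(7,7,6,4,4,3,3,2)$ and $\irr_t=64<66$. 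So a swap that locally restores alternation can strictly decrease the objective; the lemma is only salvageable if ``out of alternation'' is defined relative to one fixed target alternating word, and even then the weak inequality for each such swap --- a swap changes the number of edges, i.e.\ lives exactly in the regime where your majorization lemma is silent --- requires a genuine proof. Relatedly, the conclusion that the process ``isolates $A_n$ as the unique candidate'' cannot be right: $K_{1,3}$ attains the bound $6$ for $n=4$, the word $1001$ (degrees $4,2,2,1,1$) attains $14$ for $n=5$, and degrees $(5,3,2,2,1,1)$ and $(4,4,3,3,2,0)$ both attain $26$ for $n=6$, so the maximizers form large plateaus and any exchange argument must handle ties. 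You flagged this step yourself as the technical heart; since that step is precisely the theorem, what you have is a plausible strategy and a verified extremal value, not a proof.
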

\begin{figure}[htbp]
\centering
\vspace{-0.1cm}
\begin{minipage}[h]{6cm}
\begin{tikzpicture}
        \path (0,0) coordinate (P0);
        \path (1,0) coordinate (P1);
        \path (3,0) coordinate (P2);
        \path (4,0) coordinate (P3);
        \path (0,-1.5) coordinate (Q0);
        \path (1,-1.5) coordinate (Q1);
        \path (3,-1.5) coordinate (Q2);
        \path (4,-1.5) coordinate (Q3);

        \foreach \x in {0,...,3} {
	\fill (P\x) circle (2pt);
	\fill (Q\x) circle (2pt);}
        \foreach \x in {0,...,3} {
             \draw[dashed] (P\x) -- (Q\x) ;}

\path   (2,0) node {$\ldots$};
\path   (2,-1.5) node {$\ldots$};

             \draw (P0) -- (Q1) ;
             \draw (P0) -- (Q2) ;
             \draw (P0) -- (Q3) ;
             \draw (P1) -- (Q2) ;
             \draw (P1) -- (Q3) ;
             \draw (P2) -- (Q3) ;
             \draw (P1) -- (P0) ;
             \draw (P3) -- (P2) ;

             \draw (P3) to[out=115,in=65] (P0) ;
             \draw (P3) to[out=130,in=50] (P1) ;
             \draw (P2) to[out=130,in=50] (P0) ;
\end{tikzpicture}
\end{minipage}
\begin{minipage}[h]{6cm}
\begin{tikzpicture}

        \path (0,0) coordinate (P0);
        \path (1,0) coordinate (P1);
        \path (3,0) coordinate (P2);
        \path (4,0) coordinate (P3);
        \path (0,-1.5) coordinate (Q0);
        \path (1,-1.5) coordinate (Q1);
        \path (3,-1.5) coordinate (Q2);
        \path (4,-1.5) coordinate (Q3);
        \path (5,-0.75) coordinate (R);

        \foreach \x in {0,...,3} {
	\fill (P\x) circle (2pt);
	\fill (Q\x) circle (2pt);}
	\fill (R) circle (2pt);
        \foreach \x in {0,...,3} {
             \draw[dashed] (P\x) -- (Q\x) ;}

\path   (2,0) node {$\ldots$};
\path   (2,-1.5) node {$\ldots$};

             \draw (P0) -- (Q1) ;
             \draw (P0) -- (Q2) ;
             \draw (P0) -- (Q3) ;
             \draw (P1) -- (Q2) ;
             \draw (P1) -- (Q3) ;
             \draw (P2) -- (Q3) ;
             \draw (P1) -- (P0) ;
             \draw (P3) -- (P2) ;
             \draw (P0) -- (R) ;
             \draw (P1) -- (R) ;
             \draw (P2) -- (R) ;
             \draw (P3) -- (R) ;

             \draw (P3) to[out=115,in=65] (P0) ;
             \draw (P3) to[out=130,in=50] (P1) ;
             \draw (P2) to[out=130,in=50] (P0) ;
\end{tikzpicture}
\end{minipage}
\vspace{-0.1cm}
\caption{Graphs with maximal total irregularity $H_n$ (with dashed edges) and $\overline{H}_n$
               (without dashed edges) for even and odd $n$, respectively.\label{max-total-irr}} 
\end{figure}
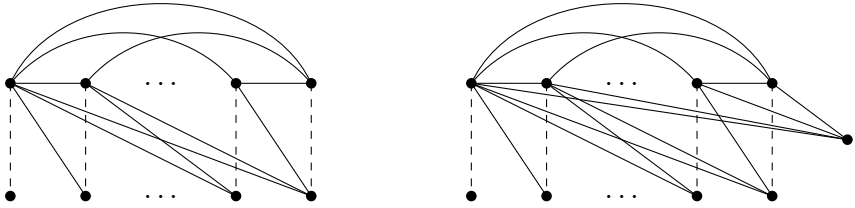
\vspace{-0.1cm}
The motivation to introduce the total irregularity of a graph, as modification of the  irregularity of graph, is twofold.
First, in contrast to $\irr(G)$, $\irr_t(G)$ can be computed directly from the sequence of the vertex degrees (degree sequence) of $G$.
Second, the most irregular graphs with respect to  $\irr$ are graphs that have only two degrees (see Figure~\ref{KS-max-irr}  for an illustration).
On the contrary the most irregular graphs with respect to  $\irr_t$, as it is shown in~\cite{Dimit-Abdo}, are graphs with maximal number of 
different vertex degrees (graphs with all doted (optional) edges in  Figure~\ref{max-total-irr}), which is much closer to what one can expect 
from ``very" irregular  graphs.

The aim of this paper is to investigate the total irregularity of graphs under several graph operations including   join, Cartesian product, 
direct product, strong product, lexicographic product, corona product, disjunction  and symmetric difference. Detailed exposition on some
graph operations one can find  in \cite{Imrich-Klavzar}.
%
\section[Results]{Results}
%
We start with simple observations about the complement and the disjoint union.

The {\it complement} of a simple graph $G$ with $n$ vertices, denoted by $\overline{G}$, is a simple
graph with $V(\overline{G})=V(G)$ and 
$E(\overline{G})=\{uv \, | \, u, v \in V(G) \; {\rm and} \; uv \notin E(G)\}$.
Thus, $uv \in E(G) \Longleftrightarrow uv \notin E(G)$. Obviously,
$E(G) \cup E(\overline{G})= E(K_n)$, and for a vertex $u$, we have
$d_{\overline{G}}(u)= n-1 - d_{G}(u)$. 
From $|d_{\overline{G}}(u)-d_{\overline{G}}(v)|= |n-1 - d_{G}(u) -(n-1 - d_{G}(v))|=
|d_{G}(u)-d_{G}(v)|$ it follows that
$
\irr_t(\overline{G}) = \irr_t(G).
$

For two graphs $G_1$ and $G_2$ with disjoint vertex sets $V(G_1)$
and $V(G_2)$ and disjoint edge sets $E(G_1)$ and $E(G_2)$ the {\it disjoint union} of $G_1$ and
$G_2$ is the graph $G = G_1 \cup G_2$ with the vertex set $V(G_1) \cup V(G_2)$ and the edge set
$E(G_1) \cup E(G_2)$. Obviously,
$
\irr_t(G \cup H) \geq \irr_t(G) + \irr_t(H).
$

Next we present sharp upper bounds for join, lexicographic product, Cartesian product,
strong product, direct product, corona product and upper bounds for disjunction and symmetric difference.
%
\subsection[Join]{Join}
%
The {\em join} $G + H$ of simple undirected graphs $G$ and $H$ is the graph with
the vertex set $V(G + H)=V (G)\cup V (H)$
and the edge set $E(G + H)=E(G)\cup E(H)\cup \left\{uv : u \in V(G), \; v \in V(H) \right\}$.
\begin{te} \label{irr2_Join}
Let $G$ and $H$ be simple undirected graphs with $\left|V(G)\right|= n_1$ and
$\left|V(H)\right| = n_2$ such that $n_1 \geq n_2$. 
Then,
\beq 
\irr_t(G + H) 
&\leq& \irr_t(G) + \irr_t(H) +  n_2 \, ( n_1 - 1)\,( n_1 - 2).  \nonumber 
\eeq	
Moreover, the bound is best possible.																
\end{te}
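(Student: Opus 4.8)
The plan is to exploit that $\irr_t$ is determined by the degree sequence alone, so the entire argument reduces to bookkeeping of how vertex degrees change under the join. First I would record the two degree formulas in $G+H$: a vertex $u\in V(G)$ keeps its $G$-neighbours and becomes adjacent to all of $V(H)$, so $d_{G+H}(u)=d_G(u)+n_2$, and symmetrically $d_{G+H}(w)=d_H(w)+n_1$ for $w\in V(H)$. I would then split the double sum defining $\irr_t(G+H)$ according to the location of the two vertices:
\begin{align*}
\irr_t(G+H) &= \frac12\sum_{u,v\in V(G)}|d_{G+H}(u)-d_{G+H}(v)| + \frac12\sum_{u,v\in V(H)}|d_{G+H}(u)-d_{G+H}(v)| \\
&\quad + \sum_{u\in V(G),\,w\in V(H)}|d_{G+H}(u)-d_{G+H}(w)|.
\end{align*}

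The second step is the observation that the additive shifts cancel inside each homogeneous block: for $u,v\in V(G)$ one has $d_{G+H}(u)-d_{G+H}(v)=d_G(u)-d_G(v)$, and likewise inside $V(H)$. Consequently the first two blocks are exactly $\irr_t(G)$ and $\irr_t(H)$, and the statement collapses to estimating the single cross term
\[
C=\sum_{u\in V(G)}\sum_{w\in V(H)}\bigl|d_G(u)-d_H(w)-(n_1-n_2)\bigr|.
\]

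The core estimate is bounding $C$, and here a direct termwise bound suffices. Using only the elementary ranges $0\le d_G(u)\le n_1-1$ and $0\le d_H(w)\le n_2-1$, the quantity inside each modulus lies in the interval $[-(n_1-1),\,n_2-1]$, so, since $n_1\ge n_2$, its absolute value never exceeds $n_1-1$. Summing this bound over all $n_1 n_2$ cross pairs and adjoining the two within-blocks yields the additive term recorded in the statement, which proves the inequality.

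I expect the genuinely delicate point to be sharpness rather than the upper bound itself, because the termwise estimate is obtained by replacing the true degrees with their most extreme admissible values, and one must check that these extremes are simultaneously realizable by honest graphs. The natural candidate is $G=\overline{K_{n_1}}$ together with $H=K_{n_2}$: then every vertex of $G$ has the smallest possible degree and every vertex of $H$ the largest, so each cross summand is forced to the endpoint $-(n_1-1)$ while both within-blocks vanish. For this pair the identity $\irr_t(G+H)=\irr_t(G)+\irr_t(H)+C$ meets the bound with equality, establishing that it is best possible. The last and most careful ingredient is verifying that this extremal pair indeed realizes exactly the constant produced by the termwise estimate, i.e. that the relaxation from genuine degree sequences to the raw degree ranges is tight.
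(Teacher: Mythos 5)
Your decomposition of $\irr_t(G+H)$ and the degree bookkeeping ($d_{G+H}(u)=d_G(u)+n_2$, $d_{G+H}(w)=d_H(w)+n_1$, the two within-blocks collapsing to $\irr_t(G)$ and $\irr_t(H)$) coincide exactly with the paper's first steps, but your core estimate of the cross term is too weak to prove the statement. The termwise bound $\left|d_G(u)-d_H(w)-(n_1-n_2)\right|\le n_1-1$ summed over all $n_1n_2$ pairs gives
\[
C \;\le\; n_1 n_2 (n_1-1),
\]
which is \emph{not} the additive constant in the theorem: it exceeds $n_2(n_1-1)(n_1-2)$ by $2n_2(n_1-1)>0$. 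So the claim that your estimate ``yields the additive term recorded in the statement'' is an arithmetic slip, and the inequality is not proved. The paper gets the smaller constant precisely because it does not argue pointwise: it first replaces $H$ by $K_{n_2}$, which turns every cross summand into the nonnegative quantity $n_1-1-d_G(u)$, so that the cross sum equals $n_2\bigl(n_1(n_1-1)-\sum_{u}d_G(u)\bigr)=n_2\bigl(n_1(n_1-1)-2m_1\bigr)$, and then inserts a lower bound on the number of edges of $G$ (taking $G$ to be a tree, $m_1=n_1-1$), which produces exactly $n_2(n_1-1)(n_1-2)$. The factor $n_1-2$ rather than your $n_1$ comes from subtracting the degree sum of $G$, information a termwise worst-case bound cannot see.

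The sharpness argument has the same defect in mirror image. For your candidate pair $G=\overline{K_{n_1}}$, $H=K_{n_2}$, every cross summand has absolute value $n_1-1$, so $\irr_t(G+H)=n_1n_2(n_1-1)$; this does not equal the stated constant $n_2(n_1-1)(n_1-2)$ but strictly exceeds it. Concretely, with $n_2=1$ your example is the star: $\irr_t(K_{1,n_1})=n_1(n_1-1)>(n_1-1)(n_1-2)$. So your pair cannot certify sharpness of the stated bound --- it overshoots it, which (as your own construction inadvertently shows) means the theorem is only consistent with the tacit assumption built into the paper's proof, namely that the minimizer of $\sum_u d_G(u)$ ranges over graphs with at least $n_1-1$ edges; the paper's extremal pair is accordingly $G$ an arbitrary tree on $n_1$ vertices and $H=K_{n_2}$, not the empty graph. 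As written, both halves of your argument --- the upper bound and the equality case --- have genuine gaps.
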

%
%
\begin{proof} 
The total irregularity of $G + H$ is
\beq \label{eqn-irr2_Join:00}
\irr_t(G + H) 
   &=& \frac{1}{2}\, \sum\limits_{u , v \in V(G + H)} \left|d_{G+H}(u)-d_{G+H}(v) \right| \nonumber \\
   &=& \frac{1}{2}\, \sum\limits_{u,v \in V(G)} \left| d_{G+H}(u)-d_{G+H}(v) \right| \nonumber \\
& \;\;\;+ &\frac{1}{2}\, \sum\limits_{u,v \in V(H)} \left| d_{G+H}(u)-d_{G+H}(v) \right| \nonumber \\
	 & \;\;\;+ &  \; \; \sum\limits_{u\in V(G)}\sum\limits_{v\in V(H)} \left| d_{G+H}(u)-d_{G+H}(v) \right|. \nonumber
\eeq
By definition, $\left| V(G+H) \right|= \left|V(G)\right|+\left|V(H)\right|=n_1 + n_2$. For vertices 
$u \in V(G)$ and $v \in V(H)$, it holds that $d_{G+H}(u) = d_G(u) +n_2$ and $d_{G+H}(v) = d_H(v) + n_1$. 
Thus, further we have
\beq \label{eqn-irr2_Join:01}
\irr_t(G + H)  
 &=&  \frac{1}{2}\, \sum\limits_{u,v \in V(G)}  \left| d_G(u)-d_G(v) \right|
		   + \frac{1}{2}\, \sum\limits_{u,v \in V(H)} \left| d_H(u)-d_H(v) \right| \nonumber \\
& & + \, \,\sum\limits_{u\in V(G)}\sum\limits_{v\in V(H)}\left|(d_G(u)+n_2)-(d_H(v)+n_1) \right| \\
&=&	\irr_t(G) + \irr_t(H) + \,\sum\limits_{u\in V(G)}\sum\limits_{v\in V(H)} \left| n_1 - n_2 + d_H(v) - d_G(u) \right|. \nonumber
\eeq
\noindent
Under the constrains $n_1 \geq n_2$, $d_G(u)  \leq n_1 -1$, and $d_H(v)  \leq n_2 -1$,
the double sum $\sum\limits_{u\in V(G)}\sum\limits_{v\in V(H)} \left| n_1 - n_2 + d_H(v) - d_G(u) \right|$ is maximal when
$H$ is a graph with maximal sum of vertex degrees, i.e., $H$ is the complete graph $K_{n_2}$, and
$G$ is a graph with minimal sum of vertex degrees, i.e., $G$ is a tree on  ${n_1}$ vertices $T_{n_1}$.
Thus,
\begin{eqnarray}  \label{eqn-M3_Join:01_}
\lefteqn{\sum\limits_{u\in V(G)} \sum\limits_{v\in V(H)} \left| n_1 - n_2 + d_H(v) - d_G(u) \right|}   
\qquad \qquad  \qquad \qquad  \qquad  \nonumber \\ 
 & \leq & \sum\limits_{u\in V(T_{n_1})}\sum\limits_{v\in V(K_{n_2})} \left| n_1 - n_2 +d_{K_{n_2}}(v)-d_{T_{n_1}}(u) \right|   \nonumber \\ 
 &=&\sum\limits_{u\in V(T_{n_1})}\sum\limits_{v\in V(K_{n_2})} \left| n_1 - 1 - d_{T_{n_1}}(u)  \right| \nonumber \\ 
 &=& n_2 \sum\limits_{u\in V(T_{n_1})} \left( n_1-1-d_{T_{n_1}}(u)  \right) \nonumber \\ 
 &=& n_2  n_1 (n_1 -1) - 2 n_2 (n_1- 1 ) \nonumber \\ 
 &=& n_2  (n_1 -1) (n_1 - 2 ) , \nonumber
\end{eqnarray}  
and 
\beq  \label{eqn-M3_Join:02}
\irr_t(G + H)  \leq    \irr_t(G) + \irr_t(H) +  n_2 ( n_1 - 1 ) ( n_1 - 2). 
\eeq
\noindent
When $n_1 \leq 2$, $\irr_t(G)= \irr_t(H)=\irr_t(G + H)=0$, and the claim of the theorem is fulfilled. 
From the derivation, it follows that  (\ref{eqn-M3_Join:02}) is equality  when 
$H$ is compete graph on $n_2$ vertices and $G$ is any tree on $n_1$ vertices . 
\end{proof}
%
%
\noindent
{\bf Example.} 
Let denote by $H_i$ a graph with $|V(H_i)|=i$ isolated vertices (vertices with degree zero).
Then, the {\it bipartite graph} $K_{i,j}$ is a join of $H_{i}$ and $H_{j}$.
Analogously, the {\it complete $k$-partite graph} $G=K_{n_1,\cdots, n_k}$
is join of $H_{n_1}, \dots, H_{n_k}$.
Straightforward calculation shows that
$\irr_t(K_{n_i,n_j}) =$ $ n_i n_j$ $\left| n_j - n_i \right|$,
For the total irregularity of $K_{n_1,\cdots, n_k}$ we have
\begin{align}  
\irr_t(K_{n_1,\cdots, n_k}) &= 
             \frac{1}{2}\,\sum\limits_{u,v\in V(K_{n_1,\cdots, n_k})} \left| d_G(u) - d_G(v) \right| \nonumber\\
       &= \sum\limits^{k-1}_{i=1}  \sum\limits^{k}_{j=i+1}  \left(\frac{1}{2}\,\sum\limits_{u,v\in V(K_{n_i,n_j})} 
				  \left| d_{G}(u)-d_{G}(v) \right| \right) \nonumber \\			
       &= \sum\limits^{k-1}_{i=1}  \sum\limits^{k}_{j=i+1} n_i n_j \left| n_j - n_i\right| 
	     = \sum\limits^{k-1}_{i=1}  \sum\limits^{k}_{j=i+1} \irr_t(K_{n_i,n_j}). \nonumber 
\end{align}
%
\subsection[lexicographic product]{Lexicographic product}
%
The {\em lexicographic product} $ G \circ  H $ (also known as the {\em graph composition}) 
of simple undirected graphs $G$ and $H$ is the graph with the vertex set $V( G \circ  H ) = V(G) \times V(H)$
and the edge set $E( G \circ  H ) = \{(u_i,v_k)(u_j,v_l) : [u_iu_j$ $\in E(G)]\vee [(v_k v_l \in E(H)) \wedge (u_i=u_j) ]\}$.
\begin{te} \label{irr2_compos}
Let $G$ and $H$ be simple undirected graphs with $\left|V(G)\right|= n_1$,
$\left|V(H)\right|= n_2$ then,
$$ 
\irr_t( G \circ  H ) \leq  \, n^3_2 \; \irr_t(G) +  n^2_1 \; \irr_t(H). 
$$ 
Moreover, this bound is sharp for infinitely many graphs.
\end{te}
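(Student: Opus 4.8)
The plan is to reduce everything to the degree sequences of $G$ and $H$, exactly as the join proof did, and then to use the triangle inequality to decouple the two contributions. First I would establish the key degree formula for a vertex $(u,v) \in V(G \circ H)$. By the definition of the edge set, $(u,v)$ is adjacent to $(u',v')$ precisely when $uu' \in E(G)$ with $v'$ arbitrary, or when $u = u'$ and $vv' \in E(H)$. The first case contributes $n_2$ neighbours for each of the $d_G(u)$ neighbours of $u$ in $G$, while the second contributes the $d_H(v)$ neighbours of $v$ in $H$. Hence
$$d_{G \circ H}(u,v) = n_2 \, d_G(u) + d_H(v).$$

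Substituting this into the definition of $\irr_t$ and summing over all pairs of vertices of $G \circ H$, each such pair being a pair $(u,v),(u',v')$ with $u,u' \in V(G)$ and $v,v' \in V(H)$, gives
$$\irr_t(G \circ H) = \frac{1}{2} \sum_{\substack{u,u' \in V(G)\\ v,v' \in V(H)}} \left| n_2 \big(d_G(u) - d_G(u')\big) + \big(d_H(v) - d_H(v')\big) \right|.$$
Next I would apply the triangle inequality inside the absolute value, which is the one place the bound loses equality. This splits the sum into a $G$-part weighted by $n_2$ and an $H$-part. In the $G$-part the summand $|d_G(u) - d_G(u')|$ does not depend on $v,v'$, so summing over those two coordinates contributes a factor $n_2^2$; recalling that $\sum_{u,u'} |d_G(u) - d_G(u')| = 2\,\irr_t(G)$, one obtains $n_2^3 \, \irr_t(G)$. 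Symmetrically, in the $H$-part the summand is independent of $u,u'$, so summing over those yields a factor $n_1^2$, producing $n_1^2 \, \irr_t(H)$. Adding the two pieces gives the claimed bound.

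The remaining point is sharpness for infinitely many graphs, and the structure of the argument shows exactly where to look: the triangle inequality is tight precisely when the two differences never disagree in sign, and the cleanest way to force this is to make one factor regular. If $H$ is regular then $d_H(v) - d_H(v') \equiv 0$, so the inner absolute value equals $n_2 |d_G(u) - d_G(u')|$ with no loss, $\irr_t(H) = 0$, and the bound collapses to the exact identity $\irr_t(G \circ H) = n_2^3 \, \irr_t(G)$; symmetrically, taking $G$ regular yields $\irr_t(G \circ H) = n_1^2\, \irr_t(H)$. Either family furnishes infinitely many examples attaining equality. I do not expect a genuine obstacle: the only delicate step is the bookkeeping of which coordinates each summand depends on when factoring the quadruple sum, and the triangle inequality already pinpoints the equality cases needed for sharpness.
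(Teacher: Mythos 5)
Your proof is correct and follows essentially the same route as the paper's: the same degree formula $d_{G \circ H}(u,v) = n_2 d_G(u) + d_H(v)$, the same triangle-inequality split of the quadruple sum, and the same factor bookkeeping yielding $n_2^3\,\irr_t(G) + n_1^2\,\irr_t(H)$. For sharpness the paper verifies the specific family $P_l \circ C_k$ by directly counting degrees; your structural observation that regularity of $H$ (or of $G$) makes every application of the triangle inequality an equality subsumes that example, since $C_k$ is regular, and is a slightly cleaner way to exhibit infinitely many extremal graphs.
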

\begin{proof}  
By the definition of $ G \circ  H $, it follows that
$\left|V( G \circ  H )\right|= n_1 n_2 $ and $d_{ G \circ  H }(u_i,v_j) = n_2 d_G(u_i) + d_H(v_j)$ 
for all $1 \leq i\leq n_1 , 1 \leq j\leq n_2$. Applying those relations, we obtain

\beq \label{eqn-irr2_comp:00}
\irr_t( G \circ  H ) 
&=& \frac{1}{2}\, \sum_{\substack{(u_i,v_k)\in V( G \circ  H )\\(u_j,v_l)\in V( G \circ  H )}}
       \left| d_{ G \circ  H }(u_i,v_k) - d_{ G \circ  H }(u_j,v_l) \right| \nonumber \\	
&=& \frac{1}{2}\, \sum_{\substack{u_i, u_j\in V(G) \\ v_k, v_l\in V(H)}}
		 \left| n_2 d_G(u_i) - n_2 d_G(u_j) 	+	d_H(v_k) - d_H(v_l) \right| \nonumber \\
&\leq&  \frac {1}{2}\, \sum_{\substack{u_i, u_j\in V(G) \\ v_k, v_l\in V(H)}}
         \left( n_2 \left| d_G(u_i)- d_G(u_j) \right| + \left| d_H(v_k) - d_H(v_l) \right| \right) \nonumber \\
&=&  \; \frac{1}{2} \; n^3_2 \;  \sum\limits_{u_i, u_j\in V(G)} \left| d_G(u_i)- d_G(u_j) \right|  \nonumber \\
&  & +  \;\frac{1}{2} \;  n^2_1 \; \sum\limits_{v_k, v_l\in V(H)} \left| d_H(v_k) - d_H(v_l) \right| \nonumber \\			
&=& n^3_2 \; \irr_t(G) +  n^2_1 \;\irr_t(H).
\eeq
To prove that the presented bound is best possible,
consider the lexicographic product $P_l \circ  C_k$, $l \geq 1, k \geq 3$ (an illustration
is given in Figure~\ref{AllOperations}(b)). Straightforward calculations give that
$ \irr_t(P_l)  = 2(l-2)$, $\irr_t(C_k) = 0$.
The graph $P_l \circ  C_k$ is comprised of $2 k$ vertices of degree $k+2$, and $k(l-2)$ vertices
of degree $2k+2$. Hence, $ \irr_t(P_l \circ  C_k) = 2 k^3 (l-2)$. 
On the other hand, the bound obtain here is
$\irr_t(P_l \circ  C_k) \leq  k^3 \; \irr_t(P_l) +  l^2 \;\irr_t(C_k) = 2 k^3 (l-2)$.
\end{proof}
%
\begin{figure*}[htbp]
\begin{center}
\includegraphics{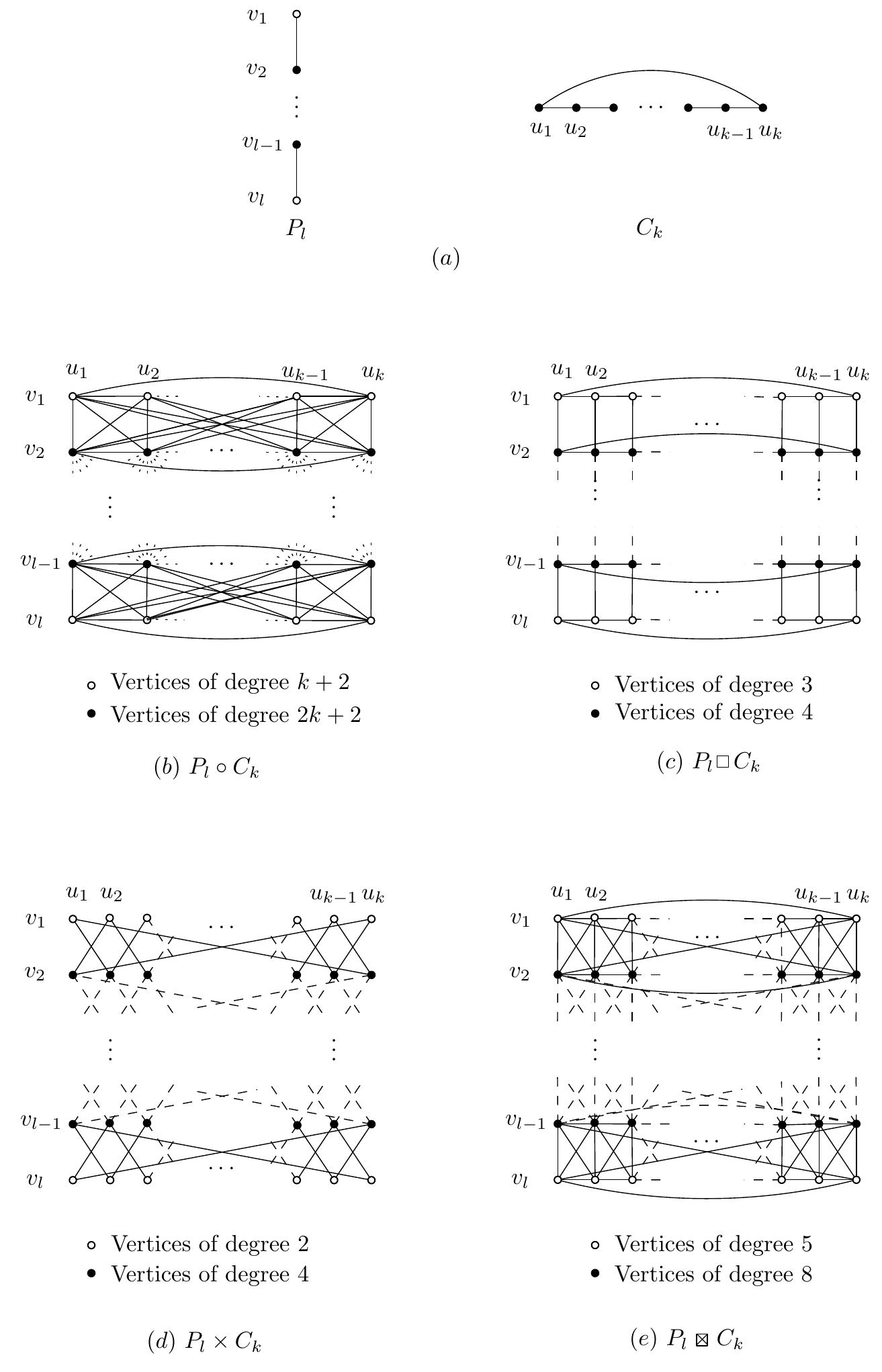}
\end{center}
\caption{
	       $(a)$ Path graph on $l$ vertices $P_l$, and cycle graph on $k$ vertices $C_k$,\,
		   $(b)$ lexicographic product graph $P_l \circ C_k$,\,
           $(c)$ Cartesian product graph $P_l \,\Box\, C_k$,\,
           $(d)$ direct product graph $P_l \times C_k$ \,and,\,
           $(e)$ strong product graph $P_l \boxtimes C_k$.
         }
\label{AllOperations}
\end{figure*}
%
\subsection[Cartesian product]{Cartesian product}

The {\em Cartesian product} $G\, \Box\, H$ of two simple undirected graphs $G$ 
and $H$ is the graph with the vertex set $V(G \,\Box\, H) = V(G) \times V(H)$ and the 
edge set
$E(G\, \Box\, H) = \{(u_i,v_k)(u_j,v_l): [(u_iu_j \in E(G))\wedge$$(v_k = v_l)]
\vee [(v_k v_l \in E(H))\wedge (u_i = u_j)]\}$.
From the definition of the Cartesian product, it follows that
$\left|V(G\, \Box\, H)\right|= n_1 n_2 $ and $d_{G\, \Box\, H}(u_i,v_j) = d_G(u_i) + d_H(v_j)$. 
Since the derivation of the upper bound on $G\, \Box\, H$ is similar to
the case of a graph lexicographic product, we omit the proof and just state the result in 
Theorem~\ref{irr2_Cartesian_pro}.
The best possible bound is obtained for $P_l\,\Box\, C_k$, $l \geq 1, k \geq 3$, illustrated in Figure~\ref{AllOperations}(c).
The graph $P_l\,\Box\, C_k$ is comprised of $2k$ vertices of degree $3$, and $k(l-2)$ vertices of degree $4$.
Thus, $ \irr_t(P_l\,\Box\, C_k) = 2 k^2 (l-2)$. The bound obtain here is
$\irr_t(P_l\,\Box\, C_k) \leq  k^2 \; \irr_t(P_l) +  l^2 \;\irr_t(C_k) = 2 k^2 (l-2)$.
\begin{te} \label{irr2_Cartesian_pro}
Let $G$ and $H$ be simple undirected graphs with $\left|V(G)\right|= n_1$, $\left|V(H)\right| = n_2$ then
$$ \irr_t(G\, \Box\, H) \leq n_2^2\; \irr_t(G) + n_1^2\; \irr_t(H).$$
Moreover, this bound is sharp for infinitely many graphs.
\end{te}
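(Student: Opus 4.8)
The plan is to mirror exactly the derivation used for the lexicographic product in Theorem~\ref{irr2_compos}, since the paper explicitly tells us the two arguments are parallel and differ only in the vertex-degree formula. The essential structural fact is that $V(G\,\Box\,H)=V(G)\times V(H)$ with $|V(G\,\Box\,H)|=n_1 n_2$, and crucially $d_{G\,\Box\,H}(u_i,v_j)=d_G(u_i)+d_H(v_j)$ (in contrast to the lexicographic product, where the $G$-degree carried a factor $n_2$). First I would write $\irr_t(G\,\Box\,H)$ as one half of the double sum over all pairs $(u_i,v_k),(u_j,v_l)$ of $|d_{G\,\Box\,H}(u_i,v_k)-d_{G\,\Box\,H}(u_j,v_l)|$, and substitute the degree formula to get the summand $\left|\,d_G(u_i)-d_G(u_j)+d_H(v_k)-d_H(v_l)\,\right|$.

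The next step is the single inequality that does all the work: apply the triangle inequality to split this into $\left|d_G(u_i)-d_G(u_j)\right|+\left|d_H(v_k)-d_H(v_l)\right|$. After that everything is bookkeeping. The first piece, summed over all $u_i,u_j\in V(G)$ and all $v_k,v_l\in V(H)$, is independent of the $H$-indices, so summing over $v_k,v_l$ just multiplies by $n_2^2$; pulling out the factor $\tfrac12$ recovers $n_2^2\,\irr_t(G)$. Symmetrically, the second piece is independent of the $G$-indices, so summing over $u_i,u_j$ multiplies by $n_1^2$, giving $n_1^2\,\irr_t(H)$. Collecting the two yields exactly the claimed bound $\irr_t(G\,\Box\,H)\le n_2^2\,\irr_t(G)+n_1^2\,\irr_t(H)$. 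I note that the exponents differ from the lexicographic case precisely because here the $G$-degree is not scaled by $n_2$, so no extra factor of $n_2$ appears on the $G$-term.

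There is essentially no hard part here; the only point demanding any care is making sure the two copies of the double sum are counted with the correct multiplicities ($n_2^2$ versus $n_1^2$) rather than swapping them, and confirming that the triangle-inequality step is the unique source of slack (so that equality, and hence sharpness, requires $d_G(u_i)-d_G(u_j)$ and $d_H(v_k)-d_H(v_l)$ to never have opposite signs in a way that causes cancellation). To establish sharpness I would simply exhibit the family already indicated in the text, $P_l\,\Box\,C_k$ with $l\ge 1,\ k\ge 3$: using $\irr_t(P_l)=2(l-2)$ and $\irr_t(C_k)=0$, and observing that $P_l\,\Box\,C_k$ has $2k$ vertices of degree $3$ and $k(l-2)$ vertices of degree $4$, a direct count gives $\irr_t(P_l\,\Box\,C_k)=2k^2(l-2)$, which coincides with the bound $k^2\,\irr_t(P_l)+l^2\,\irr_t(C_k)=2k^2(l-2)$. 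This confirms the bound is attained for infinitely many graphs.
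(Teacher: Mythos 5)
Your proof is correct and is essentially the argument the paper intends: the paper explicitly omits this proof as being analogous to Theorem~\ref{irr2_compos}, and your filled-in derivation (substitute $d_{G\,\Box\,H}(u_i,v_j)=d_G(u_i)+d_H(v_j)$, apply the triangle inequality, count multiplicities $n_2^2$ and $n_1^2$) is exactly that analogue. Your sharpness verification via $P_l\,\Box\,C_k$ also matches the paper's.
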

%
%
\subsection[strong product]{Strong product}
%
The {\em strong product} $G \boxtimes H$ of two simple undirected graphs $G$ 
and $H$ is the graph with the vertex set $V(G \boxtimes H) = V(G) \times V(H)$ and the 
edge set
$E(G \boxtimes H) = \{(u_i,v_k)(u_j,v_l): [(u_iu_j \in E(G))\wedge$ $(v_k = v_l)]
\vee [(v_k v_l \in E(H))\wedge (u_i = u_j)]$ $\vee$$ [(u_iu_j \in E(G)) \wedge (v_k v_l \in E(H))]\}$. 
\begin{te} \label{irr2_Strong_pro}
Let $G$ and $H$ be simple undirected graphs with $\left|V(G)\right|= n_1$, 
$\left|E(G)\right|= m_1$, $\left|V(H)\right|= n_2$ and  $\left|E(H)\right|= m_2$.
Then,
$$ \irr_t(G\boxtimes H) \leq \, n_2 \, (n_2 + 2 m_2)\; \irr_t(G) + n_1 \, (n_1 + 2 m_1)\; \irr_t(H).$$
Moreover, this bound is best possible.
\end{te}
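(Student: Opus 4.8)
The plan is to follow the template of the lexicographic and Cartesian product proofs; the only genuinely new ingredient is that the degree of a vertex of $G \boxtimes H$ is \emph{bilinear}, rather than affine, in the two factor degrees. First I would record the degree formula. Counting the three mutually disjoint types of neighbours prescribed by $E(G \boxtimes H)$---the two ``Cartesian'' types contribute $d_G(u_i)$ and $d_H(v_k)$ neighbours, while the ``diagonal'' type contributes $d_G(u_i)\,d_H(v_k)$---gives
$$d_{G \boxtimes H}(u_i,v_k)=d_G(u_i)+d_H(v_k)+d_G(u_i)\,d_H(v_k)=\bigl(1+d_G(u_i)\bigr)\bigl(1+d_H(v_k)\bigr)-1.$$
Substituting this into $\irr_t(G \boxtimes H)=\frac{1}{2}\sum_{u_i,u_j \in V(G)}\sum_{v_k,v_l \in V(H)}\bigl|d_{G \boxtimes H}(u_i,v_k)-d_{G \boxtimes H}(u_j,v_l)\bigr|$ reduces the task to bounding a sum of absolute differences of the products $(1+d_G(u_i))(1+d_H(v_k))$.

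The crucial step is to linearise that difference before applying the triangle inequality. Writing $A=1+d_G(u_i)$, $B=1+d_G(u_j)$, $C=1+d_H(v_k)$, $D=1+d_H(v_l)$, I would use the telescoping identity $AC-BD=C(A-B)+B(C-D)$, which yields
$$\bigl|d_{G \boxtimes H}(u_i,v_k)-d_{G \boxtimes H}(u_j,v_l)\bigr| \leq \bigl(1+d_H(v_k)\bigr)\bigl|d_G(u_i)-d_G(u_j)\bigr|+\bigl(1+d_G(u_j)\bigr)\bigl|d_H(v_k)-d_H(v_l)\bigr|.$$
Since $|d_G(u_i)-d_G(u_j)|$ does not depend on the $H$-indices and $|d_H(v_k)-d_H(v_l)|$ does not depend on the $G$-indices, the quadruple sum separates into two products of double sums, and the ``free'' indices collapse by the handshaking lemma: $\sum_{v_k,v_l \in V(H)}(1+d_H(v_k))=n_2\sum_{v_k}(1+d_H(v_k))=n_2(n_2+2m_2)$, and similarly $\sum_{u_i,u_j \in V(G)}(1+d_G(u_j))=n_1(n_1+2m_1)$. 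Recognising the two remaining half-sums as $\irr_t(G)$ and $\irr_t(H)$ produces exactly the claimed bound.

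I expect the only real difficulty to be bookkeeping: one must split $AC-BD$ so that the surviving weight is attached to an index that is afterwards summed freely, and verify that this free summation reproduces the factors $n_2+2m_2$ and $n_1+2m_1$ rather than, say, $2n_2$ or $2n_1$. The symmetry of the sum over ordered pairs makes the choice harmless---the alternative split $AC-BD=A(C-D)+D(A-B)$ gives the same factors---but it is worth checking once. For the sharpness claim I would take $P_l \boxtimes C_k$ with $l \geq 1$, $k \geq 3$ (Figure~\ref{AllOperations}(e)). Since $C_k$ is $2$-regular we have $\irr_t(C_k)=0$, so the right-hand side reduces to $n_2(n_2+2m_2)\,\irr_t(P_l)=k(3k)\cdot 2(l-2)=6k^2(l-2)$; on the other hand the product consists of $2k$ vertices of degree $5$ and $k(l-2)$ vertices of degree $8$, whence $\irr_t(P_l \boxtimes C_k)=6k^2(l-2)$, matching the bound and showing it is best possible.
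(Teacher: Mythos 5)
Your proof is correct, and your sharpness example $P_l \boxtimes C_k$ (with the same degree count: $2k$ vertices of degree $5$, $k(l-2)$ of degree $8$, giving $6k^2(l-2)$) is exactly the one in the paper. The overall template also matches the paper's proof --- degree formula, pointwise bound via an algebraic identity plus the triangle inequality, then summation --- but your key decomposition is genuinely different. The paper keeps the degree difference as $\left(d_G(u_i)-d_G(u_j)\right) + \left(d_H(v_k)-d_H(v_l)\right) + \left(d_G(u_i)d_H(v_k)-d_G(u_j)d_H(v_l)\right)$ and symmetrizes only the bilinear part via the identity $xy - x'y' = \frac{1}{2}(x+x')(y-y') + \frac{1}{2}(y+y')(x-x')$, which after the triangle inequality produces four sums, namely $n_2^2\,\irr_t(G)$, $n_1^2\,\irr_t(H)$, $2n_2m_2\,\irr_t(G)$ and $2n_1m_1\,\irr_t(H)$, recombined at the end into the stated bound. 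You instead absorb everything into the shifted product $(1+d_G)(1+d_H)-1$ and telescope asymmetrically, $AC-BD = C(A-B) + B(C-D)$, which yields only two terms and the factored bound directly; the price is that each weight sits on a single index, so the argument needs (and you correctly supply) the observation that summation over ordered pairs makes the asymmetric choice harmless, collapsing the free indices to $n_2(n_2+2m_2)$ and $n_1(n_1+2m_1)$. Your route is a bit shorter and cleaner for the strong product specifically; the paper's symmetric split does not rely on the $(1+d)$ factorization, and for that reason it transfers verbatim to the disjunction and symmetric difference bounds (Theorems \ref{irr2disj} and \ref{irr2_symm-diff}), where the degree formulas $n_2d_G + n_1d_H - d_Gd_H$ and $n_2d_G + n_1d_H - 2d_Gd_H$ are not shifted products.
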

\begin{proof}
From the definiton of the strong product, it follows 
$\left| V(G \boxtimes H) \right| = n_1 n_2$, 
$\left|E(G \boxtimes H)\right| =  m_1 n_2 + m_2 n_1 + 2 m_1 m_2$,
and  $d_{G \boxtimes H}(u_i,v_k)=d_G (u_i) + d_H (v_k) + d_G (u_i) d_H (v_k)$.
The total irregularity of $G \boxtimes H$ is
\beq \label{eqn-irr2_str-pro:00}
\irr_t(G \boxtimes H) 
&=& \frac{1}{2} \sum\limits_{(u_i,v_k),(u_j,v_l)\in V(G\boxtimes H)} \left| d_{G \boxtimes H}(u_i,v_k) -
                                              d_{G \boxtimes H}(u_j,v_l) \right| \nonumber \\
&=& \frac{1}{2} \sum\limits_{u_i, u_j \in V(G), v_k, v_l \in V(H)} 
	    \left| d_{G \boxtimes H}(u_i,v_k) - d_{G \boxtimes H}(u_j,v_l) \right|.
\eeq	
Applying simple algebraic transformation and the triangle inequality, we obtain
\beq \label{eqn-irr2_str-pro:01}
\left| d_{G \boxtimes H}(u_i,v_k) - d_{G \boxtimes H}(u_j,v_l) \right| 
&=&
    \left| ( d_G(u_i)-d_G(u_j) ) + (d_H(v_k) -d_H(v_l))  \right. \qquad \nonumber \\
& & \left. + \, ( d_G(u_i) d_H(v_k)- d_G(u_j) d_H(v_l)) \right| \nonumber \\
&\leq& 
      \left| d_G(u_i)-d_G(u_j) \right| + \left| d_H(v_k) -d_H(v_l)\right| \nonumber \\
&+& \frac{1}{2} (d_G(u_i)+d_G(u_j)) \left| d_H(v_k)-d_H(v_l) \right| \nonumber \\
&+& \frac{1}{2} (d_H(v_k)+d_H(v_l)) \left| d_G(u_i)-d_G(u_j) \right|.   
\eeq
From (\ref{eqn-irr2_str-pro:00}) and (\ref{eqn-irr2_str-pro:01}), we obtain
\beq \label{eqn-irr2_str-pro:03}
\irr_t(G\boxtimes H)        
              &\leq&  \frac{1}{2} \sum_{\substack{u_i, u_j \in V(G),\\ v_k, v_l \in V(H)}}
						    \left[\left| d_G(u_i)-d_G(u_j) \right| \right.
						+  \left. \left| d_H(v_l)-d_H(v_k) \right| \right]	\nonumber \\
		      &   &  + \frac{1}{4} \sum_{\substack{u_i, u_j \in V(G),\\ v_k, v_l \in V(H)}}
									   (d_G(u_i)+d_G(u_j)) \left| d_H(v_k)-d_H(v_l) \right|  \nonumber \\
             &   &  + \frac{1}{4} \sum_{\substack{u_i, u_j \in V(G),\\ v_k, v_l \in V(H)}}
									   (d_H(v_k)+d_H(v_l)) \left| d_G(u_i)-d_G(u_j) \right|	 \nonumber \\												
     		 & = &  n_2 \, (n_2 + 2 m_2)\; \irr_t(G) + n_1 \, (n_1 + 2 m_1)\; \irr_t(H).	\nonumber
\eeq	
To prove that the presented bound is best possible, consider the strong product $P_l \otimes C_k$, 
$l \geq 1, k \geq 3$,
illustrated in Figure~\ref{AllOperations}(e). 
We have,
$\irr_t(P_l) = 2(l-2)$, $\irr_t(C_k) = 0$. 
The graph $P_l\boxtimes C_k$ is comprised of $2 k$ vertices of degree $5$, and $k(l-2)$ vertices
of degree $8$. Hence, $ \irr_t(P_l\boxtimes C_k) = 6 k^2(l-2)$. 
On the other hand, the bound obtain here, is
$ \irr_t(P_l\boxtimes C_k) \leq  l (l + 2 (l-1))\; \irr_t(C_k)  +  k (k + 2 k)\; \irr_t(P_l)
= 6 k^2 (l-2)$.

\end{proof}
%
\subsection[direct product]{Direct product}
%
The {\em direct product} $G \times H$ (also know as the {\em tensor product}, the 
{\em Kronecker product}~\cite{P-Weichsel}, 
{\em categorical product}~\cite{D-Miller} and {\em conjunctive product}) of simple 
undirected graphs $G$ and $H$ is the graph with the vertex set $V(G \times H)=V(G) \times V(H)$,
and the edge set $E(G\times H)=\{(u_i,v_k)(u_j,v_l):(u_i,u_j)\in\,$ $E(G)\wedge(v_k,v_l)\in E(H)\}$.
From the definition of the direct product, it follows 
$\left|V(G \times H) \right|= n_1 n_2$, 
$\left| E(G \times H) \right| = 2 m_1 m_2$, 
and  $d_{G \times H}(u_i,v_k)=d_G (u_i) d_H (v_k)$. 
The proof for the upper bound on $G \times H$ is similar as
that of the strong product $G \boxtimes H$. Therefore, we show only that the bound
in Theorem~\ref{irr2_Tensor_pro} is best possible, and omit the rest of the proof.

\begin{te} \label{irr2_Tensor_pro}  
Let $G$ and $H$ be simple undirected graphs with $\left|V(G)\right|= n_1$,
$\left|E(G)\right|= m_1$, $\left|V(H)\right|= n_2$ and $\left|E(H)\right|= m_2$.
Then,
$$ \irr_t(G \times H) \leq 2 \,n_2\,m_2\;\irr_t(G) + 2\,n_1 \,m_1 \;\irr_t(H).$$
Moreover, this bound is best possible.
\end{te}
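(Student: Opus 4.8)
The plan is to mirror the derivation carried out for the strong product in (\ref{eqn-irr2_str-pro:01})--(\ref{eqn-irr2_str-pro:03}), the only structural difference being that the degree function of the direct product consists of a single multiplicative term $d_{G\times H}(u_i,v_k)=d_G(u_i)\,d_H(v_k)$. Starting from the definition and substituting this formula gives
$$\irr_t(G \times H) = \frac{1}{2} \sum_{\substack{u_i,u_j \in V(G)\\ v_k,v_l \in V(H)}} \left| d_G(u_i)\,d_H(v_k) - d_G(u_j)\,d_H(v_l) \right|.$$
The heart of the argument is the symmetric algebraic identity $ac - bd = \tfrac{1}{2}\bigl[(a-b)(c+d) + (a+b)(c-d)\bigr]$, which together with the triangle inequality yields
$$\left| d_G(u_i)d_H(v_k) - d_G(u_j)d_H(v_l) \right| \le \tfrac{1}{2}\bigl(d_H(v_k)+d_H(v_l)\bigr)\left| d_G(u_i)-d_G(u_j)\right| + \tfrac{1}{2}\bigl(d_G(u_i)+d_G(u_j)\bigr)\left| d_H(v_k)-d_H(v_l)\right|.$$
This is exactly the third group of terms appearing for the strong product; because the direct product carries no additive $d_G+d_H$ part in its degree, the two ``linear'' terms $|d_G(u_i)-d_G(u_j)|$ and $|d_H(v_k)-d_H(v_l)|$ that produced the $n_2^2$ and $n_1^2$ summands there are absent, which is precisely why the coefficients here collapse to $2n_2m_2$ and $2n_1m_1$.

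Next I would split the bound into two double sums, each carrying the overall factor $\tfrac12$ from the definition together with the $\tfrac12$ from the bound above, hence $\tfrac14$. The first sum factors as $\tfrac{1}{4}\bigl(\sum_{u_i,u_j}|d_G(u_i)-d_G(u_j)|\bigr)\bigl(\sum_{v_k,v_l}(d_H(v_k)+d_H(v_l))\bigr)$. By the handshaking lemma the inner degree sum evaluates to $\sum_{v_k,v_l}\bigl(d_H(v_k)+d_H(v_l)\bigr) = 2n_2\sum_{v\in V(H)} d_H(v) = 4n_2 m_2$, while $\sum_{u_i,u_j}|d_G(u_i)-d_G(u_j)| = 2\,\irr_t(G)$ directly from the definition (\ref{eqn:002}). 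Hence the first sum equals $2n_2m_2\,\irr_t(G)$, and by the symmetric computation the second equals $2n_1m_1\,\irr_t(H)$. Adding them gives
$$\irr_t(G \times H) \le 2\,n_2\,m_2\;\irr_t(G) + 2\,n_1\,m_1\;\irr_t(H).$$

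For sharpness I would exhibit $P_l \times C_k$ with $l \ge 1,\ k \ge 3$ (Figure~\ref{AllOperations}(d)), using $\irr_t(P_l)=2(l-2)$ and $\irr_t(C_k)=0$. Since $d_{P_l\times C_k}(u,v)=d_{P_l}(u)\,d_{C_k}(v)=2\,d_{P_l}(u)$, the product has $2k$ vertices of degree $2$ (from the two endpoints of $P_l$) and $k(l-2)$ vertices of degree $4$ (from the internal path vertices), with every nonzero degree difference equal to $2$. For a two-class degree distribution the total irregularity is the product of the two class sizes times the common difference, so $\irr_t(P_l\times C_k) = (2k)\bigl(k(l-2)\bigr)\cdot 2 = 4k^2(l-2)$, which matches the right-hand side $2\,k\,k\cdot 2(l-2) + 2\,l\,(l-1)\cdot 0 = 4k^2(l-2)$.

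The bookkeeping of the double sums is routine and I expect no genuine obstacle, the overall shape being dictated by the strong-product proof. The single point that warrants care is the decomposition of $|ac-bd|$: it is the symmetric splitting $\tfrac12[(a-b)(c+d)+(a+b)(c-d)]$, rather than an asymmetric one such as $c(a-b)+b(c-d)$, that makes the two resulting sums separate cleanly into $\irr_t(G)$ and $\irr_t(H)$ factors and keeps the final bound symmetric in the two graphs.
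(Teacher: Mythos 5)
Your proof is correct and takes essentially the same approach as the paper: the paper omits the derivation of the bound, stating it is analogous to the strong product proof, and your argument is exactly that analogue --- the same symmetric splitting $ac-bd=\tfrac{1}{2}\bigl[(a-b)(c+d)+(a+b)(c-d)\bigr]$ with the triangle inequality, followed by evaluating the factored double sums via the handshaking lemma. Your sharpness example $P_l \times C_k$, with $2k$ vertices of degree $2$ and $k(l-2)$ vertices of degree $4$ giving $\irr_t(P_l\times C_k)=4k^2(l-2)$, coincides with the paper's.
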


To prove that the presented bound is best possible,
we consider the direct product $P_l \times C_k$, $l \geq 1, k \geq 3$
(an illustration is given in Figure~\ref{AllOperations}(d).
Straightforward calculations give that
$\irr_t(P_l) = 2(l-2)$, $ \irr_t(C_k) = 0$.
The graph $P_l\times C_k$ is comprised of $2 k$ vertices of degree $2$, and $k(l-2)$ vertices
of degree $4$. Thus, $ \irr_t(P_l\times C_k) = 4 k^2(l-2)$. 
On the other hand, the bound obtain by Proposition~\ref{irr2_Tensor_pro} is
$\irr_t(P_l\times C_k) \leq  2 n_2 m_2\,\irr_t(G) + 2 n_1 m_1 \, \irr_t(H) = 4 k^2 (l-2)$.
%
\subsection[Corona product]{Corona product}
The {\em corona} product $G \odot H$ of simple undirected graphs $G$ and $H$ with 
$\left|V(G)\right|=n_1$ and $\left|V(H)\right|=n_2$, is defined as the graph who is
obtained by taking the disjoint union of $G$ and $n_1$ copies of $H$ and for each 
$i$, $1\leq i \leq n_1$, inserting edges between the $i$th vertex of $G$ and each vertex
of the $i$th copy of $H$. Thus, the corona graph $G \odot H$ is the graph with the 
vertex set $V(G \odot H) = V(G)\cup_{i=1,\dots,n_1} V (H_i)$ and the edge set 
$E(G\odot H)=E(G)\cup_{i=1,\dots,n_1} E(H_i)\cup \left\{ u_iv_j:u_i\in V(G), v_j\in V(H_i)\right\}$, 
where $H_i$ is the $i$th copy of the graph $H$.
%
\begin{te} \label{irr2_corona}  
Let $G$ and $H$ be simple undirected graphs
with $\left| V(G) \right|=n_1$ and $\left| V(H)\right|=n_2$. Then,
\beq
\irr_t(G \odot H)  \leq \irr_t (G) + n^2_1 \irr_t (H) + n^2_1 \left( n^2_2 + n_1 n_2 - 4 n_2 + 2 \right). \nonumber
\eeq
Moreover, the bound is best possible.
\end{te}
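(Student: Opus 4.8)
The plan is to follow the same template used for the join and the products: write down the degrees in the corona, partition the ordered pairs of vertices of $G\odot H$ into homogeneous groups, and treat each group separately. From the definition, a vertex $u_i\in V(G)$ has $d_{G\odot H}(u_i)=d_G(u_i)+n_2$ (its original neighbours together with the whole $i$th copy), while a vertex $v_j$ sitting in the $i$th copy has $d_{G\odot H}(v_j)=d_H(v_j)+1$ (its neighbours inside the copy together with $u_i$). I would then split the double sum defining $\irr_t(G\odot H)$ into three parts: pairs with both endpoints in $V(G)$, pairs with both endpoints among the copies of $H$, and \emph{mixed} pairs with one endpoint in $V(G)$ and one in some copy.

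First I would dispose of the two homogeneous parts. Since the shift $+n_2$ is common to every vertex of $G$, the $V(G)$--$V(G)$ part reproduces exactly $\irr_t(G)$; since the shift $+1$ is common to every copy vertex, the differences inside a single copy reproduce $\irr_t(H)$. The one point that needs care here is the contribution of pairs lying in two \emph{distinct} copies $H_r$ and $H_s$: their degree difference is still $|d_H(v_k)-d_H(v_l)|$, and summing over the $n_1(n_1-1)$ ordered pairs of copies together with the $n_1$ single copies collapses to the coefficient $n_1^2$, giving the term $n_1^2\,\irr_t(H)$. Getting this bookkeeping right, rather than a bare $n_1$, is the first place I expect to have to be careful.

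The heart of the argument is the mixed part. Each mixed unordered pair contributes $\bigl|d_G(u_i)+n_2-(d_H(v_j)+1)\bigr|=\bigl|d_G(u_i)-d_H(v_j)+n_2-1\bigr|$, and because the copy index of $v_j$ is irrelevant to this value there is an overall factor $n_1$. The key observation I would exploit is that the quantity inside the absolute value has a fixed sign: from $0\le d_G(u_i)\le n_1-1$ and $0\le d_H(v_j)\le n_2-1$ one gets $d_G(u_i)-d_H(v_j)+n_2-1\ge 0$, so the absolute value simply drops. The mixed sum then linearises into $n_1\bigl[2n_2\,m_1-2n_1\,m_2+n_1n_2(n_2-1)\bigr]$, where $m_1=|E(G)|$ and $m_2=|E(H)|$, using $\sum_i d_G(u_i)=2m_1$ and $\sum_j d_H(v_j)=2m_2$. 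This is a closed expression, so no inequality is needed until this point.

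It then remains to bound this closed expression by a quantity depending only on $n_1$ and $n_2$. Since the mixed sum is increasing in $m_1$ and decreasing in $m_2$, I would substitute the extremal values $m_1\le\binom{n_1}{2}$ and $m_2\ge 0$, which corresponds to taking $G=K_{n_1}$ and $H$ edgeless; this should be the configuration realising equality, consistent with the claim that the bound is best possible (note that at this extremum $\irr_t(G)=\irr_t(H)=0$). The main obstacle is thus not conceptual but arithmetic: carrying this substitution through and collecting the constant term. I would verify this last simplification against the stated polynomial $n_1^2\bigl(n_2^2+n_1n_2-4n_2+2\bigr)$ with particular care, since an off-by-a-multiple-of-$n_2$ slip in the constant of the mixed sum is exactly the kind of error that is easy to introduce and hard to detect, and a small numerical check (for instance $n_1=n_2=2$) is worth running to pin the constant down.
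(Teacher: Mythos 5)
Your decomposition and bookkeeping are exactly right and coincide with the paper's: the $V(G)$--$V(G)$ pairs give $\irr_t(G)$, the within-copy pairs give $n_1\,\irr_t(H)$ and the cross-copy pairs $n_1(n_1-1)\,\irr_t(H)$, collapsing to the coefficient $n_1^2$, and the fixed-sign observation $d_G(u)-d_H(v)+n_2-1\ge 0$ makes your closed form $n_1\bigl[2n_2m_1-2n_1m_2+n_1n_2(n_2-1)\bigr]$ for the mixed part correct. The step that fails is the last one: substituting $m_1=\binom{n_1}{2}$ and $m_2=0$ does \emph{not} produce the stated polynomial; it produces $n_1^2\,n_2(n_1+n_2-2)$, which exceeds $n_1^2(n_2^2+n_1n_2-4n_2+2)$ by exactly $2n_1^2(n_2-1)$. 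The stated constant corresponds to substituting $m_2=n_2-1$, i.e.\ taking $H$ to be a \emph{tree}: that is precisely what the paper does, asserting that the graph with minimal degree sum is a tree $T_{n_2}$ --- an assertion valid only if $H$ is required to be connected (or at least to have $m_2\ge n_2-1$ edges), a hypothesis absent from the theorem.

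So your plan, carried out honestly, cannot terminate at the claimed bound; it proves the weaker inequality with constant $n_1^2n_2(n_1+n_2-2)$, which over all simple graphs is in fact the correct sharp one (equality at $G=K_{n_1}$, $H$ edgeless). Your own proposed sanity check settles the matter: for $n_1=n_2=2$, $G=K_2$, $H$ edgeless, the corona has two vertices of degree $3$ and four of degree $1$, so $\irr_t(G\odot H)=16$, while the stated bound evaluates to $0+0+4(4+4-8+2)=8$. Thus the ``off-by-a-multiple-of-$n_2$'' discrepancy you anticipated is not an arithmetic slip on your side but a genuine flaw in the theorem as printed: its bound fails for sufficiently sparse (in particular disconnected) $H$, and your route recovers it only after adding the hypothesis $m_2\ge n_2-1$ and substituting that value of $m_2$ instead of $0$. (The paper's equality case, $G=K_{n_1}$ and $H$ any tree, is consistent with this: among graphs with $m_2\ge n_2-1$ the extremum is a tree, and there your linearised expression reproduces the stated polynomial exactly.)
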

\begin{proof} 
The total irregularity of $G \odot H$ is
\beq \label{eqn-irr2_corona:00}
\irr_t(G \odot H) 
&=&  \frac{1}{2}\, \sum\limits_{u, v \in V(G \odot H)}\left| d_{G \odot H}(u)-d_{G \odot H}(v) \right| \nonumber \\
&=&  \frac{1}{2}\, \sum\limits_{x,y\in V(G)}\left| d_{G \odot H}(x)-d_{G \odot H}(y) \right| \nonumber \\
&   & + \sum\limits^{n_1}_{i=1}\left( \frac{1}{2}\,\sum\limits_{z,t\in V(H_i)}
	     \left| d_{G \odot H}(z) - d_{G \odot H}(t) \right| \right) \nonumber \\
& & + \sum\limits^{n_1 -1}_{i=1}\sum\limits^{n_1}_{j=i+1} \, 
       \sum_{\substack{z\in V(H_i), t\in V(H_j)}}
        \left| d_{G \odot H}(z) - d_{G \odot H}(t) \right|  \nonumber \\
& & +    \sum\limits^{n_1}_{i=1}\, \sum\limits_{u\in V(G), v\in V(H)} 
	        \left| d_{G \odot H}(u)-d_{G \odot H}(v) \right|.	\nonumber
\eeq	
By the definition of $G \odot H$, 
$\left| V(G \odot H) \right| = \left| V(G) \right| + n_1 \left| V(H) \right|$ $=n_1+n_1 n_2$.
For a vetrex $u \in V(G)$, it holds that $d_{G \odot H}(u) = d_G(u) +n_2$ and 
for a vertex $v \in V(H_i)$, $1 \leq i \leq n_2$, we have $d_{G \odot H}(v) = d_H(v) + 1$. Thus,
\bigskip				
\beq	\label{eqn-irr2_corona:01}	
\irr_t(G \odot H) 
 &=& \frac{1}{2}\, \sum\limits_{x,y\in V(G)} \left| d_G(x)-d_G(y) \right| 
     +  \frac{1}{2} \, n_1 \, \sum\limits_{z,t\in V(H)} \left| d_H(z) - d_H(t) \right| \nonumber \\
 & & + \sum\limits^{n_1 -1}_{i=1}\sum\limits^{n_1}_{j=i+1} 
           \, \sum_{\substack{z\in V(H_i),\, t\in V(H_j)}}
		        \left| d_H(z)+1 - d_H(t)-1 \right| \nonumber \\
& & + \sum\limits^{n_1}_{i=1} \, \sum\limits_{u\in V(G), \, v\in V(H)}
       \left| d_G(u)-d_H(v) + n_2 - 1 \right| \nonumber \\	
&=& \irr_t (G) + n_1 \irr_t (H) +  n_1 ( n_1 - 1 ) \irr_t (H) \nonumber \\
 & &	+ \sum\limits^{n_1}_{i=1}
        \sum\limits_{u\in V(G),\, v\in V(H)} \left| d_G(u)-d_H(v) + n_2 - 1 \right|. 
\eeq
\noindent
Since $n_1 \geq n_2$,  the sum $\sum\limits_{u\in V(G),\, v\in V(H)} \left| d_G(u)-d_H(v) + n_2 - 1 \right|$
is maximal when 
$\sum_{u\in V(G)}  d_G(u) $ is maximal, i.e., $G$ is the complete graph $K_{n_1}$,  and $\sum_{v\in V(H)}  d_H(v) $ is minimal, i.e., 
$H$ is a tree on $n_2$ vertices $T_{n_2}$.
Thus,
\bigskip
%
\begin{eqnarray} \label{eqn-M3_corona:01_1}		
\lefteqn{ \sum\limits_{u\in V(G),\, v\in V(H)} \left| d_G(u)-d_H(v) + n_2 - 1 \right|} \nonumber\\
& \leq & \sum\limits_{u\in V(K_{n_1})}\sum\limits_{v\in V(T_{n_2})} \left| d_{K_{n_1}}(u)-d_{T_{n_2}}(v) + n_2 - 1 \right|  \nonumber\\
&=& \sum\limits_{u\in V(K_{n_1})}\sum\limits_{v\in V(T_{n_2})} \left| n_1 - 1 - d_{T_{n_2}}(v) + n_2 - 1 \right|   \nonumber\\	              	
&=&  n_1 \sum\limits_{v\in V(T_{n_2})} \left( n_1  + n_2 - 2 - d_{T_{n_2}}(v) \right)   \nonumber\\	
&=&  n_1 n_2 (n_1  + n_2 - 2 ) -  2 n_1( n_2 -1 )   \nonumber\\	
&=&  n_1 ( n_2^2 + n_1 n_2 - 4  n_2 + 2 ). 
\end{eqnarray} 
\smallskip
Substituting (\ref{eqn-M3_corona:01_1}) into  (\ref{eqn-irr2_corona:01}), we obtain 
%
%
\beq  \label{eqn-M3_corona:03}
\irr_t(G \odot H)  \leq \irr_t (G) + n^2_1 \irr_t (H) + n^2_1 \left( n^2_2 + n_1 n_2 - 4 n_2 + 2 \right).
\eeq	
From the derivation of the bound (\ref{eqn-M3_corona:03}), it follows that the sharp bound is obtained when 
$G$ is compete graph on $n_1$ vertices and $H$ is any tree on $n_2$ vertices. 			
\end{proof}
\subsection[disjunction]{Disjunction}
%
The {\em disjunction} graph $G \vee H$ of simple undirected graphs $G$ and $H$ with
$\left|V(G)\right|=n_1$ and $\left|V(H)\right|=n_2$ is the graph with the vertex set
$V(G \vee H)=V(G) \times V(H)$ and the edge set 
$E(G \vee H) =$ $\left\{(u_i,v_k)(u_j,v_l):\right.$ $u_i u_j \in E(G)$ $\left. \vee v_k v_l \in E(H)\right\}$.
It holds that $\left|V(G \vee H)\right|= n_1 n_2 $,
and $d_{G\vee H}(u_i,v_k) = n_2 d_G(u_i) + n_1 d_H(v_k) - d_G(u_i) d_H(v_k)$
for all $i$, $k$ where, $1 \leq i\leq n_1$, $1 \leq k\leq n_2$.
%
\begin{te} \label{irr2disj}
Let $G$ and $H$ be simple undirected graphs with $\left|V(G)\right|= n_1$, 
$\left|E(G)\right|= m_1$, $\left|V(H)\right|= n_2$ and  $\left|E(H)\right|= m_2$.
Then,
$$ \irr_t(G\vee H) \leq  \, n_2 \, (n^2_2 + 2 m_2)\; \irr_t(G) + n_1 \, (n^2_1 + 2 m_1)\; \irr_t(H). $$
\end{te}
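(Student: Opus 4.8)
The plan is to follow the same strategy used for the strong product in Theorem~\ref{irr2_Strong_pro}, since the degree formula $d_{G\vee H}(u_i,v_k)=n_2 d_G(u_i)+n_1 d_H(v_k)-d_G(u_i)d_H(v_k)$ has exactly the same shape as the one for $G\boxtimes H$---two linear terms plus a product term---differing only in the linear coefficients and in the sign of the product. Starting from the definition of $\irr_t$, I would first expand the pairwise degree difference as
$$d_{G\vee H}(u_i,v_k)-d_{G\vee H}(u_j,v_l) = n_2\bigl(d_G(u_i)-d_G(u_j)\bigr)+n_1\bigl(d_H(v_k)-d_H(v_l)\bigr)-\bigl(d_G(u_i)d_H(v_k)-d_G(u_j)d_H(v_l)\bigr).$$

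The key step is controlling the product term. I would apply the algebraic identity $ab-cd=\tfrac12(a+c)(b-d)+\tfrac12(b+d)(a-c)$ with $a=d_G(u_i)$, $b=d_H(v_k)$, $c=d_G(u_j)$, $d=d_H(v_l)$, and then invoke the triangle inequality to obtain
$$\bigl|d_{G\vee H}(u_i,v_k)-d_{G\vee H}(u_j,v_l)\bigr| \leq n_2\bigl|d_G(u_i)-d_G(u_j)\bigr|+n_1\bigl|d_H(v_k)-d_H(v_l)\bigr| +\tfrac12\bigl(d_G(u_i)+d_G(u_j)\bigr)\bigl|d_H(v_k)-d_H(v_l)\bigr|+\tfrac12\bigl(d_H(v_k)+d_H(v_l)\bigr)\bigl|d_G(u_i)-d_G(u_j)\bigr|.$$

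Finally I would sum this over all quadruples $(u_i,u_j,v_k,v_l)\in V(G)^2\times V(H)^2$ and multiply by $\tfrac12$. The bookkeeping is routine: summing $|d_G(u_i)-d_G(u_j)|$ over $V(G)^2$ yields $2\,\irr_t(G)$, the free index pair over $V(H)$ contributes a factor $n_2^2$, and $\sum_{u_i,u_j}(d_G(u_i)+d_G(u_j))=4n_1m_1$ by the handshaking lemma (and symmetrically with the roles of $G$ and $H$ interchanged). The four terms then collapse to $n_2^3\,\irr_t(G)$, $n_1^3\,\irr_t(H)$, $2n_1m_1\,\irr_t(H)$ and $2n_2m_2\,\irr_t(G)$; grouping by $\irr_t(G)$ and $\irr_t(H)$ gives exactly $n_2(n_2^2+2m_2)\,\irr_t(G)+n_1(n_1^2+2m_1)\,\irr_t(H)$.

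The main obstacle---and really the only nonroutine point---is the treatment of the product term $d_G(u_i)d_H(v_k)-d_G(u_j)d_H(v_l)$; the symmetric split via the identity above is what forces the appearance of the degree sums, hence of the edge counts $m_1$ and $m_2$, and produces the Zagreb-flavored coefficients. The minus sign in front of the product in the disjunction's degree formula, as opposed to the plus sign in the strong product, is immaterial once the triangle inequality is applied, so the final bound has the same structural form as in Theorem~\ref{irr2_Strong_pro}, with the extra factors $n_2^2$ and $n_1^2$ (rather than $n_2$ and $n_1$) tracing back to the linear coefficients $n_2$ and $n_1$ in the degree formula.
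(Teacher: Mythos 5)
Your proposal is correct and matches the paper's own proof essentially step for step: the same expansion of the degree difference, the same symmetric splitting $ab-cd=\tfrac{1}{2}(a+c)(b-d)+\tfrac{1}{2}(b+d)(a-c)$ followed by the triangle inequality (this is what the paper calls ``simple algebraic manipulation''), and the same evaluation of the three resulting sums as $n_2^3\,\irr_t(G)+n_1^3\,\irr_t(H)$, $2n_1m_1\,\irr_t(H)$ and $2n_2m_2\,\irr_t(G)$. Your only addition is making the algebraic identity explicit, which the paper leaves implicit.
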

\begin{proof}
The total irregularity of  $G\vee H$ is
\beq \label{eqn-irr2_disj:00}
\irr_t(G\vee H) 
         &=&  \frac{1}{2}\, \sum\limits_{(u_i,v_k),(u_j,v_l) \in V(G\vee H)} \left| d_{G\vee H}(u_i,v_k) -
                                                                        d_{G\vee H}(u_j,v_l) \right|   \nonumber \\
         &=&  \frac{1}{2}\, \sum\limits_{u_i, u_j \in V(G), v_k, v_l \in V(H)}
						                                              \left| d_{G\vee H}(u_i,v_k) - d_{G\vee H}(u_j,v_l) \right|.
\eeq	
\noindent
Since $d_{G\vee H}(u_i,v_k) = n_2 d_G(u_i) + n_1 d_H(v_k) - d_G(u_i) d_H(v_k)$
for all $i$, $k$ where, $1 \leq i\leq n_1$, $1 \leq k\leq n_2$. We obtain
\beq \label{eqn-irr2_disj:02-00}
\left| d_{G\vee H}(u_i,v_k) - d_{G\vee H}(u_j,v_l) \right|
   &=&  \left|n_2 d_G(u_i) + n_1 d_H(v_k) - d_G(u_i) d_H(v_k) \right. \nonumber \\
   & & - \left.(n_2 d_G(u_j) + n_1 d_H(v_l) - d_G(u_j) d_H(v_l)) \right| \nonumber
\eeq
\noindent
Further, by simple algebraic manipulation and by the triangle inequality, we have
\beq \label{eqn-irr2_disj:02}
\left| d_{G\vee H}(u_i,v_k) - d_{G\vee H}(u_j,v_l) \right|
&\leq& n_2 \left| d_G(u_i)-d_G(u_j) \right| + n_1 \left| d_H(v_k) - d_H(v_l)\right| \nonumber \\
&    &  +\frac{1}{2}  (d_G(u_i)+d_G(u_j)) \left| d_H(v_k)-d_H(v_l) \right| \nonumber \\
&    &  +\frac{1}{2}  (d_H(v_k)+d_H(v_l)) \left| d_G(u_i)-d_G(u_j) \right|.\nonumber \\
\eeq
From (\ref{eqn-irr2_disj:00}) and (\ref{eqn-irr2_disj:02}), we obtain

\begin{eqnarray}  \label{eqn-irr2_disj:03}
%
\irr_t(G\vee H)  &\leq&  \frac{1}{2} \sum_{\substack{u_i, u_j \in V(G), \\ v_k, v_l \in V(H)} }
							           \left[ n_2 \left| d_G(u_i)-d_G(u_j) \right|
								    +  n_1 \left| d_H(v_l)-d_H(v_k) \right|\right] 	\nonumber \\			
			&    & + \frac{1}{4} \sum_{\substack{u_i, u_j \in V(G), \\ v_k, v_l \in V(H)} }
							            (d_G(u_i)+d_G(u_j)) \left| d_H(v_k)-d_H(v_l) \right| \nonumber \\
			&    & +  \frac{1}{4} \sum_{\substack{u_i, u_j \in V(G), \\ v_k, v_l \in V(H)} }
                          (d_H(v_k)+d_H(v_l)) \left| d_G(u_i)-d_G(u_j) \right|. 
\end{eqnarray} 
\noindent
The first sum in (\ref{eqn-irr2_disj:03}) is equal to $n^3_2 \,\irr_t(G) + n^3_1 \,\irr_t(H)$,
the second to $2 n_1 m_1 \,\irr_t(H)$, and the third to $2 n_2 m_2 \,\irr_t(G)$. Hence,
\beq                      
 \irr_t(G\vee H) 
      &\leq&   n^3_2 \; \irr_t(G) + n^3_1 \; \irr_t(H) + 2\,  n_1 m_1 \;\irr_t(H) + 2 \, n_2 m_2 \; \irr_t(G) \nonumber \\		    
			&  = & n_2 \, (n^2_2 + 2 m_2)\; \irr_t(G) + n_1 \, (n^2_1 + 2 m_1)\; \irr_t(H).  \nonumber		
\eeq	
\end{proof}
\subsection[Symmetric difference]{Symmetric difference}
%
The {\em symmetric difference} $G \oplus H$ of simple undirected graphs
$G$ and $H$ with $\left|V(G)\right|=n_1$ and $\left|V(H)\right|=n_2$ is the 
graph with the vertex set $V(G \oplus H)=V(G) \times V(H)$ and the edge set 
$E(G \oplus H)=\left\{(u_i,v_k)(u_j,v_l): \, \mbox{either}\, u_iu_j \in E(G)
                                           \, \mbox{or}\, v_k v_l \in E(H) \right\}$.
It holds that
$\left|V(G \oplus H)\right|$ $= n_1 n_2 $,
and $d_{(G\oplus H)}(u_i,v_j) = n_2 d_G(u_i) + n_1 d_H(v_j) - 2 d_G(u_i) d_H(v_j)$
for all $1 \leq i\leq n_1 , 1 \leq j\leq n_2$.

Much as in the previous case, we present only the bound on the total irregularity of symmetric difference of two graphs.
\begin{te} \label{irr2_symm-diff}
Let $G$ and $H$ be simple undirected graphs with $\left|V(G)\right|= n_1$, 
$\left|E(G)\right|= m_1$, $\left|V(H)\right|= n_2$ and  $\left|E(H)\right|= m_2$.
Then,
$$ 
\irr_t(G\oplus H) \leq  n_2 \, (n^2_2 + 4 m_2)\; \irr_t(G) + n_1 \, (n^2_1 + 4 m_1)\; \irr_t(H). 
$$
\end{te}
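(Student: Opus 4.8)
The plan is to follow exactly the template established for the strong product and disjunction, since the symmetric difference has the same structural form: its degree formula $d_{G\oplus H}(u_i,v_j) = n_2 d_G(u_i) + n_1 d_H(v_j) - 2\, d_G(u_i) d_H(v_j)$ differs from the disjunction's degree formula only in the coefficient $2$ on the product term. First I would write out $\irr_t(G\oplus H)$ as one half of the double sum of $|d_{G\oplus H}(u_i,v_k) - d_{G\oplus H}(u_j,v_l)|$ over all pairs, and reindex it over $u_i,u_j\in V(G)$ and $v_k,v_l\in V(H)$, exactly as in \eqref{eqn-irr2_disj:00}.

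Next I would expand the difference inside the absolute value. Substituting the degree formula gives
$$
n_2\bigl(d_G(u_i)-d_G(u_j)\bigr) + n_1\bigl(d_H(v_k)-d_H(v_l)\bigr) - 2\bigl(d_G(u_i)d_H(v_k) - d_G(u_j)d_H(v_l)\bigr).
$$
The key algebraic step is to rewrite the product-difference term symmetrically. Using the identity
$$
d_G(u_i)d_H(v_k) - d_G(u_j)d_H(v_l) = \tfrac12(d_G(u_i)+d_G(u_j))\bigl(d_H(v_k)-d_H(v_l)\bigr) + \tfrac12(d_H(v_k)+d_H(v_l))\bigl(d_G(u_i)-d_G(u_j)\bigr),
$$
and then applying the triangle inequality, I obtain the analogue of \eqref{eqn-irr2_disj:02}, namely
$$
\bigl| d_{G\oplus H}(u_i,v_k) - d_{G\oplus H}(u_j,v_l) \bigr| \leq n_2 \left| d_G(u_i)-d_G(u_j)\right| + n_1 \left| d_H(v_k)-d_H(v_l)\right|
$$
plus the two cross terms $(d_G(u_i)+d_G(u_j))\left|d_H(v_k)-d_H(v_l)\right|$ and $(d_H(v_k)+d_H(v_l))\left|d_G(u_i)-d_G(u_j)\right|$, each now carrying coefficient $1$ rather than $\tfrac12$, precisely because of the factor $2$ in the product term.

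Summing this bound over all four indices splits into three pieces, just as in \eqref{eqn-irr2_disj:03}. The first piece contributes $n_2^3\,\irr_t(G) + n_1^3\,\irr_t(H)$ (each degree-difference is summed freely over the $n_2^2$ or $n_1^2$ copies of the other factor's index pair). For the cross terms I would use $\sum_{v_k,v_l} (d_H(v_k)+d_H(v_l)) = 2 n_2 \sum_{v} d_H(v) = 4 n_2 m_2$ (and symmetrically $4 n_1 m_1$), so that after accounting for the prefactor the second and third pieces contribute $4\,n_2 m_2\,\irr_t(G)$ and $4\,n_1 m_1\,\irr_t(H)$ respectively. Collecting terms yields
$$
\irr_t(G\oplus H) \leq n_2(n_2^2 + 4 m_2)\,\irr_t(G) + n_1(n_1^2 + 4 m_1)\,\irr_t(H),
$$
which is the claimed bound. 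I do not expect any genuine obstacle: the only point requiring care is the bookkeeping of the factor $2$ in the product term, which is what promotes the coefficient from $2m_2$ in the disjunction to $4m_2$ here; everything else is a verbatim adaptation of the disjunction proof.
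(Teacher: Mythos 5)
Your proof is correct and is precisely the argument the paper intends: the paper omits the proof of this theorem, stating only that it proceeds ``much as in the previous case'' (the disjunction), and your adaptation of that template is accurate. In particular, your symmetric rewriting of the product-difference term, the resulting coefficient-$1$ cross terms (rather than $\tfrac12$), and the bookkeeping $\sum_{v_k,v_l}(d_H(v_k)+d_H(v_l))\left|d_G(u_i)-d_G(u_j)\right|$-type sums yielding $4n_2m_2\,\irr_t(G)$ and $4n_1m_1\,\irr_t(H)$ all check out, giving exactly the stated bound.
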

\smallskip
\noindent
%
%
\section[Conclusion]{Conclusion}
In this paper we consider the total irregularity of simple undirected graphs under several graph operations.
We present sharp upper bounds for join, lexicographic product, Cartesian product, strong product, direct
product and corona product. It is an open problem if the presented upper bounds on the total irregularity 
of disjunction and symmetric difference are the best possible. 
%
%

%
%
\end{document}